\newcommand{\etal}{{et~al.}}
\title{Better approximation algorithms for maximum weight internal spanning trees in cubic graphs and claw-free graphs
}
\author{Ahmad Biniaz
}
\affil{School of Computer Science\\University of Windsor\\\texttt{ahmad.biniaz@gmail.com}}
\date{}
\newtheorem{lemma}{Lemma}
\newtheorem{theorem}{Theorem}
\newtheorem{observation}{Observation}
\newtheorem*{problem*}{Problem}
\newtheorem*{claim*}{Claim}
\newtheorem*{invariant*}{Invariant}
\begin{document}
	\maketitle
	\begin{abstract}
		Given a connected vertex-weighted graph $G$, the maximum weight internal spanning tree (MaxwIST) problem asks for a spanning tree of $G$ that maximizes the total weight of internal nodes. This problem is NP-hard and APX-hard, with the currently best known approximation factor $1/2$ (Chen~\etal, Algorithmica 2019). For the case of claw-free graphs, Chen~\etal ~present an involved approximation algorithm with approximation factor $7/12$. They asked whether it is possible to improve these ratios, in particular for claw-free graphs and cubic graphs.
		
		We improve the approximation factors for the MaxwIST problem in cubic graphs and claw-free graphs. For cubic graphs we present an algorithm that computes a spanning tree whose total weight of internal vertices is at least $\frac{3}{4}-\frac{3}{n}$ times the total weight of all vertices, where $n$ is the number of vertices of $G$. 
		This ratio is almost tight for large values of $n$.
		For claw-free graphs of degree at least three, we present an algorithm that computes a spanning tree whose total internal weight is at least $\frac{3}{5}-\frac{1}{n}$ times the total vertex weight. The degree constraint is necessary as this ratio may not be achievable if we allow vertices of degree less than three. 
		
		With the above ratios, we immediately obtain better approximation algorithms with factors $\frac{3}{4}-\epsilon$ and $\frac{3}{5}-\epsilon$ for the MaxwIST problem in cubic graphs and claw-free graphs of degree at least three, for any $\epsilon>0$. 
		In addition to improving the approximation factors, the new algorithms are relatively short compared to that of Chen~\etal. The new algorithms are fairly simple, and employ a variant of the depth-first search algorithm that selects a relatively-large-weight vertex in every branching step. Moreover, the new algorithms take linear time while previous algorithms for similar problem instances are super-linear.
	\end{abstract}
\section{Introduction}

The problems of computing spanning trees with enforced properties have been well studied in the fields of algorithms and graph theory. In the last decades, a number of works have
been devoted to the problem of finding spanning trees with few leaves. Besides its own theoretical importance as a generalization of the Hamiltonian path problem, this problem has applications in the design of cost-efficient communication networks \cite{Salamon2008} and pressure-consistent water supply \cite{Binkele-Raible2013}.

The problem of finding a spanning tree of a given graph having a minimum number of leaves (MinLST) is NP-hard---by a simple reduction from the Hamiltonian path problem---and cannot be approximated within a constant factor unless P = NP \cite{Lu1996}. From an optimization point of view, the MinLST problem is equivalent to the problem of finding a spanning tree
with a maximum number of internal nodes (the MaxIST problem). The MaxIST is NP-hard---again by a reduction form the Hamiltonian path problem---and APX-hard as it does not admit a polynomial-time approximation scheme (PTAS) \cite{Li2014}. Although the MinLST is hard to approximate, there are constant-factor approximations algorithms for the MaxIST problem with successively improved approximation ratios $1/2$ \cite{Prieto2003, Salamon2008}, $4/7$ \cite{Salamon2009} (for graphs of degree at least two), $3/5$ \cite{Knauer2015}, $2/3$ \cite{Li2017a}, $3/4$ \cite{Li2014}, and $13/17$ \cite{Chen2018}. The parameterized version of the MaxIST problem, with the number of internal vertices as the parameter, has also been extensively studied, see e.g. \cite{Li2015, Fomin2012, Fomin2013, Binkele-Raible2013, Prieto2003}.

This paper addresses the weighted version of the MaxIST problem. Let $G$ be a connected vertex-weighted graph where each vertex $v$ of $G$ has a non-negative weight $w(v)$. The {\em maximum weight internal spanning tree} (MaxwIST) problem asks for a spanning tree $T$ of $G$ such that the total weight of internal vertices of $T$ is maximized. 

The MaxwIST has received considerable attention in recent years. Let $n$ and $m$ denote the number of vertices and edges of $G$, respectively, and let $\Delta$ be the maximum vertex-degree in $G$.
Salamon \cite{Salamon2009} designed the first approximation algorithm for this problem. Salamon's algorithm runs in $O(n^4)$ time, has approximation ratio $1/(2\Delta -3)$, and relies on the technique of locally improving an arbitrary spanning tree of $G$. 
By extending the local neighborhood in Salamon’s algorithm, Knauer and Spoerhase \cite{Knauer2015} obtained the first constant-factor approximation algorithm for the problem with ratio $1/(3+\epsilon)$, for any constant $\epsilon > 0$.

Very recently, Chen~\etal~\cite{Chen2019} present an elegant approximation algorithm with ratio $1/2$ for the MaxwIST problem. Their algorithm, which is fairly short and simple, works as follows. Given a vertex-weighted graph $G$, they first assign to each edge $(u,v)$ of $G$ the weight $w(u)+w(v)$. The main ingredient of the algorithm is the observation that the total internal weight of an optimal solution for the MaxwIST problem is at most the weight of a maximum-weight matching in $G$. Based on this, they obtain a maximum-weight matching $M$ and then augment it to a spanning tree $T$ in such a way that the heavier end-vertex of every edge of $M$ is an internal node in $T$. This immediately gives a $1/2$ approximate solution for the MaxwIST problem in $G$. The running time of the algorithm is dominated by the time of computing $M$ which is  $O(n \min\{m \log n, n^2\})$. 

The focus of the present paper is the MaxwIST problem in cubic graphs (3-regular graphs) and claw-free graphs (graphs not containing $K_{1,3}$ as an induced subgraph). We first review some related works for these graph classes, and then we provide a summary of our contributions.
\subsection{Related works on cubic graphs and claw-free graphs}
The famous Hamiltonian path problem is a special case of the MaxIST problem that seeks a spanning tree with $n-2$ internal nodes, where $n$ is the total number of vertices. The Hamiltonian path problem is NP-hard in cubic graphs~\cite{Garey1976} and in line graphs (which are claw-free) \cite{Bertossi1981}. Therefore, the MaxIST problem (and consequently the MaxwIST problem) is NP-hard in both cubic graphs and claw-free graphs.


Salamon and Wiener \cite{Salamon2008} present approximation algorithms with ratios $2/3$ and $5/6$ for the MaxIST problem in claw-free graphs and cubic graphs, respectively. 
Although the first ratio has been improved to $13/17$ (even for arbitrary graphs) \cite{Chen2018}, the ratio $5/6$ is still the best known for cubic graphs.
Binkele-Raible~\etal~\cite{Binkele-Raible2013} studied the parameterized version of the MaxIST problem in cubic graphs. They design an FPT-algorithm that decides in $O^*(2.1364^k)$ time whether a cubic graph has a spanning tree with at least $k$
internal vertices. The Hamiltonian path problem (which is a special case with $k=n-2$) arises in computer graphics
in the context of stripification of triangulated surface models \cite{Arkin1996, Gopi2004}. Eppstein studied the problem of counting Hamiltonian cycles and the traveling salesman problem in cubic graphs \cite{Eppstein2007}. 

When restricted to claw-free graphs, Salamon~\cite{Salamon2009} presented an approximation algorithm with ratio $1/2$ for the MaxwIST problem (this is obtained by adding more local improvement rules to their general $1/(2\Delta -3)$-approximation algorithm). In particular, they show that if a claw-free graph has degree at least two, then one can obtain a spanning tree whose total internal weight is at least $1/2$ times the total vertex weight. Chen~\etal~\cite{Chen2019} improved this approximation ratio to $7/12$; they obtain this ratio by extending their own simple $1/2$-approximation algorithm for general graphs. In contrast to their first algorithm which is simple, this new algorithm (restricted to claw-free graphs) is highly involved and relies on twenty-five pages of detailed case analysis. 

For the MaxwIST problem in cubic graphs, no ratio better than $1/2$ (which holds for general graphs) is known. Chen~\etal~\cite{Chen2019} asked explicitly whether it is possible to obtain better approximation algorithms for the MaxwIST problem in cubic graphs and claw-free graphs.  
\subsection{Our contributions}
\label{contribution-section}
We study the MaxwIST problem in cubic graphs and claw-free graphs. We obtain approximation algorithms with better factors for both graph classes. For cubic graphs we present an algorithm (say A1) that achieves a tree whose total internal weight is at least $\frac{3}{4}-\frac{3}{n}$ times the total vertex weight. 
This ratio (with respect to the total vertex weight) is almost tight if $n$ is sufficiently large.

For claw-free graphs of degree at least three we present an algorithm (say A2) that achieves a tree whose total internal weight is at least $\frac{3}{5}-\frac{1}{n}$ times the total vertex weight. This ratio (with respect to the total vertex weight) may not be achievable if we drop the degree constraint.

With the above ratios, immediately we obtain better approximation algorithms with factors $\frac{3}{4}-\epsilon$ and $\frac{3}{5}-\epsilon$ for the MaxwIST problem in cubic graphs and claw free graphs (of degree at least three) as follows. Consider an instance of the MaxwIST problem in a cubic graph with $n$ vertices, and consider any fixed $\epsilon>0$. If $n\leqslant \frac{3}{\epsilon}$, then we solve this instance optimally, say by checking all possible spanning trees. If $n> \frac{3}{\epsilon}$ then we run our algorithm A1. This establishes the approximation factor $\frac{3}{4}-\epsilon$ for cubic graphs. The approximation factor $\frac{3}{5}-\epsilon$ for claw-free graphs is obtained analogously by running A2 instead. As we will see later, this approximation factor holds  even if we allow the presence of degree-1 vertices in the claw-free graph. 

\vspace{10pt}
\noindent{\bf Simplicity.} In addition to improving the approximation ratios, the new algorithms (A1 and A2) are relatively short compared to that of Chen~\etal~\cite{Chen2019}.
The new algorithms are not complicated either. They involve an appropriate use of the depth-first search (DFS) algorithm that selects a relatively-large-weight vertex in every branching step.
The new algorithms take linear time, while previous algorithms for similar problem instances are super-linear. 

\vspace{10pt}
\noindent{\bf Approach and comparisons.} 
Our $3/4-\epsilon$ approximation algorithm for weighted cubic graphs is based on a greedy DFS, similar to the $5/6$ approximation algorithm of  Salamon and Wiener \cite{Salamon2008} for unweighted cubic graphs. However, there are major differences between the two algorithms:
(i) The DFS algorithm of \cite{Salamon2008} selects a vertex with minimum number of non-visited neighbors in every branching step. This criteria does not guarantee a good approximation ratio for the weighted version. Our algorithm uses a different branching criteria that depends on the number of non-visited neighbors and the weight of a node. (ii) The ratio $5/6$ is obtained by a counting argument that charges every leaf of the DFS tree to five internal nodes. The counting argument does not work for the weighted version. The weight of a leaf could propagate over many internal nodes, and thus bounding the approximation ratio in the weighted version does not seem straightforward. To establish the ratio $3/4-\epsilon$ we use more powerful ingredients and different type of analysis (Lemma~\ref {path-lemma}).

Our $3/5-\epsilon$ approximation algorithm for weighted claw-free graphs starts by computing a greedy DFS tree $T$ using another branching criteria. This tree is a $1/2$ approximate solution for the MaxwIST problem. To establish better ratios we need to employ stronger techniques. We perform appropriate local improvements on $T$ (by deleting and adding edges) to obtain another tree that achieves our desired ratio $3/5-\epsilon$. The local improvements are performed in a clever way that guarantees linear running time. 
We note that the $1/2$ approximation algorithm of Salamon~\cite{Salamon2009} takes $O(n^4)$ time (which has been improved to $O(n^3)$ by \cite{Knauer2015}). 

\subsection{Preliminaries}
Let $G$ be a connected undirected graph. A {\em DFS-tree} in $G$ is the rooted spanning tree 
that is obtained by running the depth-first search (DFS) algorithm on $G$ from an arbitrary vertex called the {\em root}.
It is well-known that the DFS algorithm classifies the edges of an undirected graph into {\em tree edges} and {\em backward edges}. Backward edges are the non-tree edges of $G$. These edges have the following property that we state in an observation.

\begin{observation}
	\label{non-tree-edge-obs}
	The two end-vertices of every non-tree edge of $G$ belong to the same path in the DFS-tree that starts from the root. In other words, one end-vertex is an ancestor of the other. 
\end{observation}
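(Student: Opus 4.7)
The plan is to prove Observation~\ref{non-tree-edge-obs} by a direct case analysis based on the chronological order in which the DFS algorithm first discovers (marks as visited) each endpoint of a non-tree edge. Fix any non-tree edge $(u,v)$ of $G$, and without loss of generality assume that $u$ is discovered strictly before $v$ during the execution of DFS. I will show this forces $v$ to lie in the subtree rooted at $u$ in the DFS-tree, which is precisely the conclusion of the observation.

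The first step is to recall the DFS invariant: when $\mathrm{DFS}(u)$ is invoked, $u$ is immediately marked visited, and before $\mathrm{DFS}(u)$ returns every edge incident to $u$ is examined. When the edge $(u,v)$ is examined from $u$'s side, either $v$ is still unvisited at that moment---in which case $\mathrm{DFS}$ recurses on $v$, making $(u,v)$ a tree edge---or $v$ has already been marked visited, in which case $(u,v)$ is classified as a non-tree (back) edge. Since $(u,v)$ is non-tree by assumption, $v$ must already be visited at the instant $u$ examines the edge $(u,v)$.

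Combining this with our assumption that $u$ is discovered before $v$, the only possibility is that $v$'s discovery happens \emph{after} $u$'s call begins and \emph{before} $u$ examines the edge $(u,v)$; in particular, $v$ is discovered during the execution of $\mathrm{DFS}(u)$. By the standard nesting property of recursive DFS calls, any vertex discovered during $\mathrm{DFS}(u)$ belongs to the subtree of $u$ in the DFS-tree. Hence $u$ is an ancestor of $v$, so $u$ and $v$ both lie on the unique root-to-$v$ path in the DFS-tree.

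There is no real obstacle here: the argument is a textbook property of undirected DFS. The one subtle point to get right is ruling out the alternative scenario in which $u$ is discovered before $v$ \emph{and} $\mathrm{DFS}(u)$ already returned before $v$ was discovered. This scenario is impossible because $u$ cannot return without examining the edge $(u,v)$; at that examination $v$ would still be unvisited, forcing $(u,v)$ to be a tree edge and contradicting our assumption. Highlighting this point cleanly is the only care the write-up requires.
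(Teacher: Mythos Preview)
Your proof is correct and is exactly the standard textbook argument for why every non-tree edge in an undirected DFS is a back edge. The paper itself does not prove this observation at all; it simply states it as a well-known property of DFS, so there is no paper proof to compare against beyond noting that your write-up supplies the elementary justification the paper omits.
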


Let $T$ be a DFS-tree in $G$. For every vertex $v$ we denote by $d_T(v)$ the degree of $v$ in $T$. For every two vertices $u$ and $v$ we denote by $\delta_T(u,v)$ the unique path between $u$ and $v$ in $T$. For every edge $e$ in $G$, we refer to the end-vertex of $e$ that is closer to the root of $T$ by the {\em higher end-vertex}, and refer to other end-vertex of $e$ by the {\em lower end-vertex}.
 
If $G$ is a vertex-weighted graph and $S$ is a subset of vertices of $G$, then we denote the total weight of vertices in $S$ by $w(S)$.

\section{The MaxwIST problem in cubic graphs}

Let $G$ be a connected vertex-weighted cubic graph with vertex set $V$ such that each vertex $v\in V$ has a non-negative weight $w(v)$. For each vertex $v$ let $N(v)$ be the set containing $v$ and its three neighbors. Let $r$ be a vertex of $G$ with minimum $w(N(r))$. Observe that $w(N(r))\leqslant 4w(V)/n$, where $n=|V|$. We employ a greedy version of the DFS algorithm that selects---for the next node of the traversal---a node $x$ that maximizes the ratio $\frac{w(x)}{u(x)}$ where $u(x)$ is the number of non-visited neighbors of $x$. If $u(x)=0$ then the ratio is $+\infty$. Let $T$ be the tree obtained by running this greedy DFS algorithm on $G$ starting from $r$. Notice that $T$ is rooted at $r$. Also notice that, during the DFS algorithm, for every vertex $x$ (with $x\neq r$) we have $u(x)\in\{0,1,2\}$.

\begin{figure}[htb]
	\centering
	\includegraphics[width=.55\linewidth]{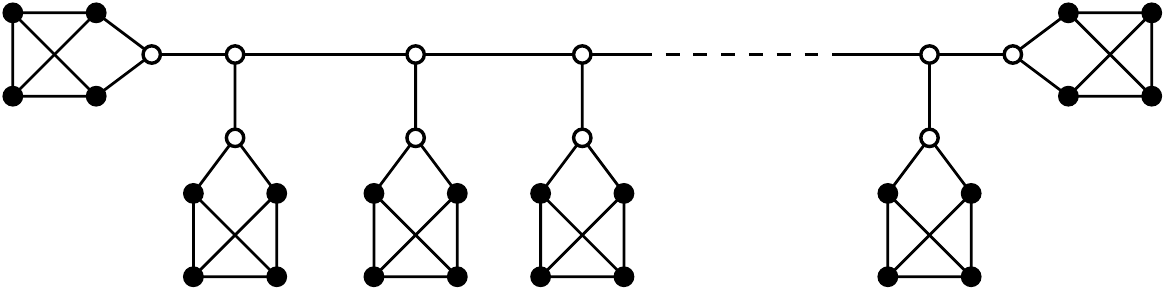}
	\caption{A cubic graph in which black and white vertices have weights 1 and 0, respectively.}
	\label{cubic-fig-0}
\end{figure}

In the rest of this section we show that $T$ is a desired tree, that is, the total weight of internal vertices of $T$ is at least $3/4-3/n$ times the total weight of all vertices. This ratio is almost tight for large $n$. For example consider the cubic graph in Figure~\ref{cubic-fig-0} where every black vertex has weight 1 and every white vertex has weight 0. In any spanning tree of this graph at most three-quarter of black vertices can be internal. 
The following lemma plays an important role in our analysis. In the rest of this section the expression ``while processing $x$'' refers to the moment directly after visiting vertex $x$ and before making decision to follow which of its children in DFS algorithm.

\vspace{10pt}
\noindent\begin{minipage}[t]{\textwidth}
	
	\noindent\begin{minipage}[b]{.82\textwidth}
		\begin{lemma}
			\label{path-lemma}
			Let $x_0x'_1$ be a backward edge with $x_0$ lower than $x'_1$. Let $$\delta(x_0,x'_1)=(x'_1x_1,x_1x'_2,x'_2x_2,\allowbreak\dots,\allowbreak x_{k-1}x'_{k},x'_{k}x_{k})\text{~~~with $k\geqslant 1$ }$$ be a path in $G$ such that each $x'_ix_i$ is a tree-edge where $x_i$ is the child of $x'_i$ on the path $\delta_T(x'_i,x_0)$, each $x_ix'_{i+1}$ is a backward edge with $x_i$ lower than $x'_{i+1}$, and $u(x_k)=2$ while processing $x'_k$. Let $u(x_0)$ be the number of non-visited neighbors of $x_0$ while processing $x'_1$. Then, it holds that $u(x_0)\in\{1,2\}$, and $$w(x_1)+w(x_2)+\dots+w(x_k)\geqslant \frac{{2w(x_0)}}{{u(x_0)}}.$$ Moreover, $d_T(x_i)=d_T(x'_i)=2$ for all $i\in\{1,\dots,k-1\}$ if $k>1$, and either $x'_k$ is the root or it has degree $2$ in $T$.   
		\end{lemma}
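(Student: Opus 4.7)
My approach is to exploit the greedy DFS selection rule at each $x'_i$ along with a careful count of degrees and non-visited neighbors in the cubic graph. The proof splits into a structural half (neighborhood identification and tree-degree computation) and a quantitative half (a chain of greedy inequalities summed geometrically).

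I would first establish the neighborhoods along the zigzag. For each $1\leqslant i\leqslant k$, the three neighbors of $x'_i$ in $G$ are exactly its parent in $T$ (when it exists), $x_i$, and its \emph{backward partner} ($x_{i-1}$ for $i\geqslant 2$, or $x_0$ for $i=1$). These three are distinct: if the partner equalled $x_i$, a backward edge would coincide with a tree edge, impossible. Hence $x_i$ is the unique child of $x'_i$ in $T$, so $d_T(x'_i)=2$ unless $x'_i$ is the root---and only $x'_k$ can be the root since $x'_{i+1}$ is always a strict ancestor of $x'_i$. A symmetric count at $x_i$ for $i\leqslant k-1$ yields the neighbor list $\{x'_i,x'_{i+1},c_i\}$, where $c_i$ is the child of $x_i$ in $T$ on $\delta_T(x_i,x_0)$; so $d_T(x_i)=2$. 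These observations give the degree conditions in the conclusion.

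For $u(x_0)\in\{1,2\}$: the bound $u(x_0)\leqslant 2$ is immediate because $x'_1$ is a visited neighbor. For $u(x_0)\geqslant 1$ I proceed by contradiction. Any visited neighbor of $x_0$ at the moment $x'_1$ is processed must be either an ancestor of $x'_1$ on the DFS stack or a fully explored vertex; the latter is impossible while $x_0$ itself is unvisited, since a fully processed vertex has all of its neighbors visited. So if both non-$x'_1$ neighbors $a,b$ of $x_0$ were visited, both would be strict ancestors of $x'_1$, and none of $\{a,b,x'_1\}$ could serve as $x_0$'s tree parent: $a,b$ lie strictly above $x'_1$ on the root-path and so are not immediate parents of the deeper $x_0$, while $x'_1$ is excluded by the hypothesis that $x_0x'_1$ is a backward edge.

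For the quantitative step, DFS processes $x'_k,x'_{k-1},\ldots,x'_1$ in top-down order, and at processing of $x'_i$ the only unvisited neighbors of $x'_i$ are $x_i$ and its partner; the greedy choice of $x_i$ gives $w(x_i)/u(x_i)\geqslant w(\text{partner})/u(\text{partner})$. Using the neighborhood lists I would evaluate $u(x_i)=1$ for $i\leqslant k-1$ (since $x'_{i+1}$ is visited and $c_i$ is the only remaining unvisited neighbor), $u(x_k)=2$ by hypothesis, and $u(x_{i-1})=2$ for $i\geqslant 2$ (both $x'_{i-1}$ and $c_{i-1}$ are unvisited descendants of $x'_i$). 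These substitutions yield
\[
w(x_1)\geqslant\tfrac{u(x_1)}{u(x_0)}\,w(x_0),\quad w(x_i)\geqslant\tfrac12 w(x_{i-1})\ \ (2\leqslant i\leqslant k-1),\quad w(x_k)\geqslant w(x_{k-1}).
\]
Setting $\alpha=w(x_0)/u(x_0)$, the $k=1$ case is immediate since $u(x_1)=2$ gives $w(x_1)\geqslant 2\alpha$. For $k\geqslant 2$, iterating the recurrence gives $w(x_i)\geqslant \alpha/2^{i-1}$ for $1\leqslant i\leqslant k-1$ and $w(x_k)\geqslant \alpha/2^{k-2}$, which sum to exactly $2\alpha$ by the telescoping identity $\sum_{j=0}^{k-2}2^{-j}+2^{-(k-2)}=2$. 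The main obstacle I anticipate is the careful bookkeeping of which vertices are visited at each processing moment: the ratio $1/2$ in the middle inequality is precisely what makes the geometric sum close at the tight value $2\alpha$, so any off-by-one error at either endpoint would break the bound.
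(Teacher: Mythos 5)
Your proposal is correct and follows essentially the same route as the paper: the same greedy-ratio inequalities $w(x_i)/u(x_i)\geqslant w(x_{i-1})/u(x_{i-1})$ at each $x'_i$, the same evaluation of the $u$-values ($u(x_i)=1$ for $i\leqslant k-1$, $u(x_{i-1})=2$ for $i\geqslant 2$), the same geometric summation closing at exactly $2w(x_0)/u(x_0)$, and the same degree count via parent, child, and incident backward edge. The only cosmetic difference is that you prove $u(x_0)\geqslant 1$ by contradiction, whereas the paper observes directly that $x_0$'s tree-parent lies on $\delta_T(x'_1,x_0)$ and is therefore still unvisited when $x'_1$ is processed.
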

	\end{minipage} 
	\hfill
	\begin{minipage}[b]{.17\textwidth}	
		\centering	
		\includegraphics[width=.57in]{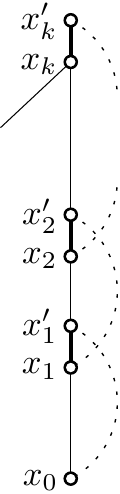}
		\vspace{5pt}
	\end{minipage}
	
\end{minipage}

\begin{proof}
	In the figure to the right, dotted lines represent backward edges, bold-solid lines represent tree-edges, every normal-solid line represents a path in $T$, and the dashed line represents either a tree edge or a backward edge connecting $x_k$ to a lower vertex.
	
	First we verify the potential values of $u(x_0)$. At every step of the DFS algorithm (except for the first step that we choose the root), the number of non-visited neighbors of the next node---to be traversed---is at most $2$ because of the 3-regularity. Since $x_0$ is not the root, $u(x_0)\leqslant 2$. Since $x_0x'_1$ is a backward edge, there is a node, say $x'_0$, on the path $\delta_T(x'_1,x_0)$ such that $x'_0x_0$ is a tree edge (it might be the case that $x'_0=x_1$). Thus at the moment $x'_1$ was processed, $x'_0$ was a non-visited neighbor of $x_0$, and thus $u(x_0)\geqslant 1$. Therefore, $u(x_0)\in\{1,2\}$.
	
	Now we prove the inequality. For every $i\in\{1,\dots, k\}$ it holds that $\frac{w(x_i)}{u(x_i)} \geqslant \frac{w(x_{i-1})}{u(x_{i-1})}$ while processing $x'_i$, because otherwise the greedy DFS would have select $x_{i-1}$ instead of $x_i$. If $k=1$, then by the statement of the lemma we have $u(x_1)=2$ (as $x'_2$ is undefined), and thus $\frac{w(x_1)}{2} \geqslant \frac{w(x_{0})}{u(x_{0})}$ and we are done. Assume that $k\geqslant 2$.
	For every $i\in\{1,\dots, k-1\}$ it holds that $u(x_i)=1$ while processing $x'_i$, because $x_i$ has a visited neighbor $x'_{i+1}$ and a non-visited neighbor which is $x_i$'s child on the path $\delta_T(x_i,x_0)$. For every $i\in\{2,\dots, k\}$ it holds that $u(x_{i-1})=2$ while processing $x'_i$, because $x_{i-1}$ has two non-visited neighbors which are $x'_{i-1}$ and $x_{i-1}$'s child on the path $\delta_T(x_{i-1},x_0)$. By the statement of the lemma $u(x_k)=2$ while processing $x'_k$. Therefore,
	\begin{align}
	\notag w(x_k)&\geqslant w(x_{k-1}),\\
	\notag w(x_1)&\geqslant \frac{{w(x_{0})}}{u(x_0)},~\text{and}\\
	\notag w(x_i)&\geqslant \frac{{w(x_{i-1})}}{2}\text{~for $i\in\{2,\dots,k-1\}$}.
	\end{align}
	The above inequalities imply that
	\begin{align}
	\notag w(x_k)&\geqslant \frac{1}{2^{k-2}}\cdot \frac{{w(x_0)}}{u(x_0)}\text{, and}\\
	\notag w(x_i)&\geqslant \frac{1}{2^{i-1}}\cdot \frac{{w(x_0)}}{u(x_0)}\text{~for $i\in\{1,\dots,k-1\}$}.
	\end{align}
	Therefore,
	\begin{align}
	\notag w(x_1)+w(x_2)+\dots+w(x_k)&\geqslant \left(\frac{1}{2^0}+\frac{1}{2^1}+\dots+\frac{1}{2^{k-3}}+\frac{1}{2^{k-2}}+\frac{1}{2^{k-2}}\right)\cdot \frac{{w(x_0)}}{u(x_0)}\\ \notag &=2\cdot\frac{{w(x_0)}}{u(x_0)}.
	\end{align}  
	
	To verify the degree constraint notice that each vertex $x\in\{x_1,\dots,x_{k-1},x'_1,\dots,x'_{k-1}\}$ has a child and a parent in $T$, and also it is incident to a backward edge. Therefore $d_T(x)=2$. The vertex $x'_k$ has a child in $T$, and also it is incident to a backward edge. If $x'_k$ has a parent in $T$ then $d_T(x'_k)=2$ otherwise it is the root.  
\end{proof}

\begin{wrapfigure}{r}{.99in}
	\vspace{-7pt}
	\centering
	\includegraphics[width=.7in]{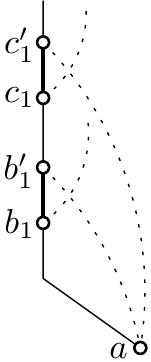}
	\vspace{-3pt}
\end{wrapfigure}
Let $L$ be the set of nodes of $T$ that do not have any children (the leaves); $L$ does not contain the root.
Consider any leaf $a$ in $L$. Let $b'_1$ and $c'_1$ be the higher end-vertices of the two backward edges that are incident to $a$. It is implied from Observation~\ref{non-tree-edge-obs} that both $b'_1$ and $c'_1$ lie on $\delta_T(r,a)$. Thus we can assume, without loss of generality, that $c'_1$ is an ancestor of $b'_1$.

Start from $a$, follow the backward edge $ab'_1$, then follow the tree edge $b'_1b_1$ where $b_1$ is the child of $b'_1$ on $\delta_T(b'_1,a)$, and then follow non-tree and tree edges alternatively and find the path $\delta(a,b'_1)=(b'_1b_1, b_1b'_2,\dots, b'_kb_k)$ with $k\geqslant 1$, that satisfies the conditions of the path $\delta(x_0,x'_1)$ in Lemma~\ref{path-lemma} where $a$ plays the role of $x_0$, $b_i$s play the roles of $x_i$s, and $b'_i$s play the roles of $x'_i$s. (If $u(b_i)<2$ while processing $b'_i$, then $b_i$ must be the lower endpoint of some backward edge $b_i b'_{i+1}$.) Observe that such a path exists and it is uniquely defined by the pair $(a,b'_1)$ because $u(b_k)=2$ while processing $b'_k$ and $d_T(b_i)=d_T(b'_i)=2$ for all $i\in\{1,\dots,k-1\}$ if $k>1$. 
While processing $b'_1$ we have $u(a)=1$. 
Therefore, by Lemma~\ref{path-lemma} we get 
$$w(b_1)+\dots+w(b_k)\geqslant 2w(a).$$

Analogously, find the path $\delta(a,c'_1)=(c'_1c_1,c_1c'_2\dots, c'_lc_l)$ with $l\geqslant 1$, by following the backward edge $ac'_1$. 
Since $u(a)=2$ while processing $c'_1$, Lemma~\ref{path-lemma} implies that  
$$w(c_1)+\dots+w(c_l)\geqslant w(a).$$
Adding these two inequalities, we get 
\begin{equation}
\label{one-leaf-weight}
w(b_1)+\dots+w(b_k)+w(c_1)+\dots+w(c_l)\geqslant 3 w(a).
\end{equation}  
Consider the sets $\{b_1,\dots,b_k\}$ and $\{c_1,\dots,c_l\}$ for all leaves in $L$; notice that there are $2|L|$ sets. All elements of these sets are internal vertices of $T$ as they have a parent and a child. 
If the root is incident to at most one backward edge, then these sets do not share any vertex (because every vertex in these sets has degree 2 in $T$). 
If the root is incident to two backward edges then it has only one child which we denote it by $r_c$. In this case the sets can only share $r_c$. Moreover only two sets can share $r_c$ (because of 3-regularity). 

Let $I$ be a set that contains all internal nodes of $T$ except the root. Then $V=I\cup L\cup \{r\}$. Based on the above discussion and Inequality~\eqref{one-leaf-weight} we have 
$$w(I)\geqslant 3 w(L)-w(r_c)=3\left(w(V)-w(I)-w(r)\right)-w(r_c).$$
By rearranging the terms and using the fact that $w(N(r))\leqslant 4w(V)/n$ we have
$$4w(I)\geqslant 3w(V)-3w(r)-w(r_c)\geqslant 3w(V)-3w(N(r))\geqslant 3w(V)-12w(V)/n$$
Dividing both sides by $4w(V)$ gives the desired ratio 
$$\frac{w(I)}{w(V)}\geqslant \frac{3}{4}-\frac{3}{n}.$$
Therefore, $T$ is a desired tree. As discussed in Section~\ref{contribution-section} we obtain a $\left(\frac{3}{4}-\epsilon\right)$-approximation algorithm for the MaxwIST problem in cubic graphs. Because of the 3-regularity, the number of edges of every $n$-vertex cubic graph is $O(n)$. Therefore, the greedy DFS algorithm takes $O(n)$ time.

\begin{theorem}
There exists a linear-time $\left(\frac{3}{4}-\epsilon\right)$-approximation algorithm for the maximum weight internal spanning tree problem in cubic graphs, for any $\epsilon>0$.
\end{theorem}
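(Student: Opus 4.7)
The plan is to prove the theorem in two stages: first establish that the greedy-DFS tree $T$ constructed in this section satisfies $w(I) \geq \left(\frac{3}{4} - \frac{3}{n}\right) w(V)$, where $I$ is the set of internal vertices of $T$; then convert this vertex-weight guarantee into a $\left(\frac{3}{4}-\epsilon\right)$-approximation of the optimum internal weight.

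For the first stage, the machinery is largely in place. Root the DFS at a vertex $r$ minimizing $w(N(r))$; since each vertex is covered exactly four times by the collection $\{N(v)\}_{v\in V}$ in a cubic graph, averaging gives $w(N(r)) \leq 4 w(V)/n$. Then run the greedy DFS that branches to a non-visited neighbor $x$ maximizing $w(x)/u(x)$. For each leaf $a$, the two backward edges incident to $a$ determine two ancestors $b'_1$ and $c'_1$ and two alternating paths $\delta(a,b'_1)$, $\delta(a,c'_1)$ of the form required by Lemma~\ref{path-lemma}; the termination condition $u(\cdot)=2$ at the top vertex makes them uniquely determined by the pair. With $u(a)=1$ while processing $b'_1$ and $u(a)=2$ while processing $c'_1$, two applications of the lemma deliver weight contributions of at least $2w(a)$ and $w(a)$ respectively, hence at least $3w(a)$ per leaf.

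The key subtlety is ensuring that the path-interior vertices are not double-counted across leaves. The degree constraints supplied by Lemma~\ref{path-lemma}, namely that each non-endpoint vertex on such a path has degree $2$ in $T$ and each top endpoint $x'_k$ is either the root or has degree $2$, force any internal vertex to appear on at most one of these path sets, with the lone exception of $r_c$, the unique child of $r$ when $r$ itself is incident to two backward edges (at most two paths can meet at $r_c$ in that case). Summing inequality~\eqref{one-leaf-weight} over all leaves yields $w(I) \geq 3 w(L) - w(r_c)$, and combined with $V = I \cup L \cup \{r\}$ and $w(r) + w(r_c) \leq w(N(r)) \leq 4w(V)/n$, routine algebra produces $4 w(I) \geq 3 w(V) - 12 w(V)/n$, i.e.\ $w(I)/w(V) \geq 3/4 - 3/n$.

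For the second stage I would apply the standard padding argument already sketched in Section~\ref{contribution-section}: given $\epsilon > 0$, if $n \leq 3/\epsilon$ solve the instance optimally by exhaustive enumeration of spanning trees (a constant-size instance), and otherwise run the greedy DFS to obtain $w(I) \geq (3/4 - 3/n)\, w(V) > (3/4-\epsilon)\, w(V) \geq (3/4-\epsilon)\,\mathrm{OPT}$, since $\mathrm{OPT} \leq w(V)$. Linear running time follows because a cubic graph has $O(n)$ edges and DFS is linear in the graph size. The main obstacle I anticipate is the disjointness bookkeeping in stage one, where one must carefully verify that the only possible vertex overlap across leaves is $r_c$; a careless count would silently shrink the $3/4$ ratio.
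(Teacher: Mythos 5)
Your proposal is correct and follows essentially the same route as the paper: the same choice of root via the averaging bound $w(N(r))\leqslant 4w(V)/n$, the same two applications of Lemma~\ref{path-lemma} per leaf (with $u(a)=1$ and $u(a)=2$), the same degree-2 disjointness argument with the single $r_c$ exception, and the same padding step to pass from the $3/4-3/n$ vertex-weight guarantee to a $\left(\frac{3}{4}-\epsilon\right)$-approximation. No gaps to report.
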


\section{The MaxwIST problem in claw-free graphs}

Let $G$ be a connected vertex-weighted claw-free graph with vertex set $V$ such that each vertex $v\in V$ is of degree at least 3 and it has a non-negative weight $w(v)$. 

Our algorithm for claw-free graphs is more involved than the simple greedy DFS algorithm for cubic graphs. For cubic graphs we used the DFS-tree directly because we were able to charge the weight of every internal vertex (except the root) to exactly one leaf as every internal vertex is incident to at most one backward edge. However, this is not the case for claw-free graphs---every internal vertex of a DFS-tree can be incident to many backward edges. To overcome this issue, the idea is to first compute a DFS-tree using a different greedy criteria and then modify the tree.

Here we employ a greedy version of the DFS algorithm that selects a maximum-weight vertex in every branching step. Let $T$ be the tree obtained by running this greedy DFS algorithm on $G$ starting from a minimum-weight vertex $r$. Notice that $T$ is rooted at $r$, and $w(r)\leqslant w(V)/n$, where $n=|V|$.
In the rest of this section we modify $T$ to obtain another spanning tree $T'$ whose total internal weight at least $3/5-1/n$ times its total vertex weight. This ratio may not be achievable if we allow vertices of degree less than 3. For example consider the claw-free graph in Figure~\ref{claw-free-fig-0} where every black vertex has weight 1 and every white vertex has weight 0. In any spanning tree of this graph at most half of black vertices can be internal.

\begin{figure}[htb]
	\centering
	\includegraphics[width=.55\linewidth]{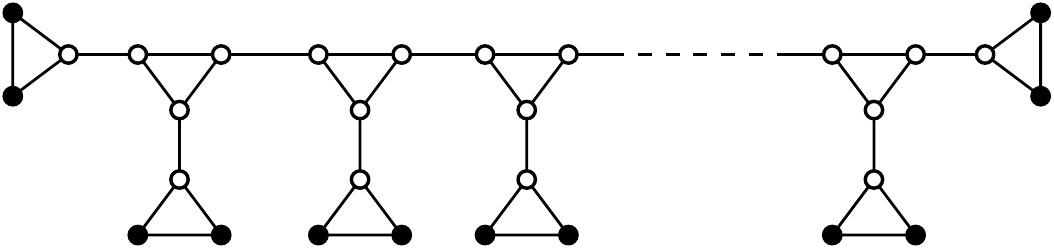}
	\caption{A claw-free graph in which black and white vertices have weights 1 and 0, respectively.}
	\label{claw-free-fig-0}
\end{figure}

\subsection{Preliminaries: some properties of $T$}

The following lemma, though very simple, plays an important role in the design of our algorithm. 

\begin{lemma}
	\label{binary-tree-obs}
	The tree $T$ is a binary tree, i.e., every node of $T$ has at most two children. 
\end{lemma}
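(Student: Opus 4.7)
The plan is to argue by contradiction. Assume some node $v$ of $T$ has (at least) three children, and let $u_1, u_2, u_3$ be the first three, listed in the order in which the greedy DFS descends into them from $v$. Since $G$ is claw-free, the set $\{v, u_1, u_2, u_3\}$ cannot induce a $K_{1,3}$ centered at $v$, so at least one edge must exist in $G$ among the pairs $u_1u_2$, $u_1u_3$, $u_2u_3$. The goal is to show each of these three possibilities contradicts standard DFS behavior.

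The key invariant I would use is the following: for $i < j$, the child $u_j$ is non-visited at the instant the DFS returns from the subtree of $u_i$ back to $v$, because otherwise $u_j$ would have been discovered inside that subtree and would sit as a descendant of $u_i$ (or as a backward-edge neighbor of $v$), not as a child of $v$. In particular, $u_j$ stays non-visited throughout the entire exploration of $u_i$'s subtree, up to the moment $u_j$ itself is reached.

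Now suppose $u_iu_j$ is an edge of $G$ with $i<j$. When the DFS first processes $u_i$, the vertex $u_j$ is a neighbor of $u_i$ that has not yet been visited, so either $u_j$ is picked as an actual child of $u_i$, or some other non-visited neighbor is picked first and $u_j$ is handled in a later round of $u_i$'s loop (or gets swallowed as a descendant of an earlier child of $u_i$ before $u_i$'s loop gets back to it). In every scenario, by the time the DFS finishes $u_i$ and returns to $v$, the vertex $u_j$ has been visited. This contradicts the invariant of the previous paragraph. Applying this to each of the three possible edges — with $u_2u_3$ treated by the same argument run on the exploration of $u_2$'s subtree — rules out every case, so $v$ cannot have three children.

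I do not expect a serious obstacle; the main care needed is the one already flagged above, namely that the three children are not simultaneously available as non-visited neighbors of $v$, so the claw-freeness has to be combined with the ``absorption'' property of DFS (any neighbor that is non-visited when a subtree begins is visited by the time that subtree ends). Note that the greedy branching criterion plays no role here — the statement holds for any DFS in a claw-free graph — so the lemma can be phrased and proved without reference to vertex weights.
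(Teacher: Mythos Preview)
Your proposal is correct and follows the same logical skeleton as the paper: assume three children, use claw-freeness to force an edge among them, and derive a contradiction with DFS behavior. The only difference is that you re-derive from scratch the fact that siblings in a DFS tree cannot be adjacent in $G$, whereas the paper dispatches this in one line by invoking Observation~\ref{non-tree-edge-obs} (every non-tree edge joins an ancestor to a descendant, so no edge of $G$ can join two children of the same node); citing that observation would shorten your argument considerably.
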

\begin{proof}
	If a node $v\in T$ has more than two children, say $v_1,v_2,v_3,\dots$, then by Observation~\ref{non-tree-edge-obs} there are no edges between $v_1$, $v_2$, and $v_3$ in $G$. Therefore, the subgraph of $G$ that is induced by $\{v,v_1,v_2,v_3\}$ is a $K_{1,3}$. This contradicts the fact that $G$ is claw-free. 
\end{proof}

\vspace{2pt}
\begin{wrapfigure}{r}{2in}
		\vspace{-5pt}
		\centering
		\includegraphics[width=1.8in]{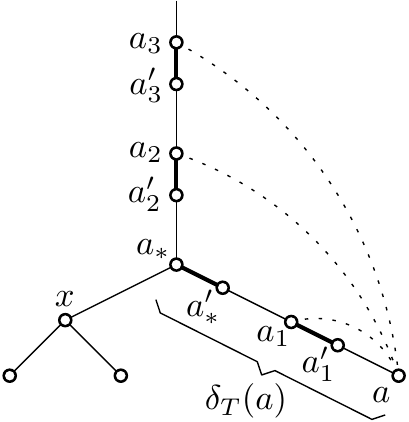}
		\vspace{0pt}
	\end{wrapfigure}
In the following description, ``a leaf of $T$'' refers to a node of $T$ that does not have any children, so the root is not a leaf even if it has degree 1.  
For every leaf $a$ of $T$, we denote by $a_1, a_2,\dots, a_k$ the higher end-vertices of the backward edges incident to $a$ while walking up $T$ from $a$ to the root. Since each vertex of $G$ has degree at least three, every leaf $a$ is incident to at least two backward edges, and thus $k\geqslant 2$. For each $i\in\{1,\dots,k\}$ we denote by $a'_i$ the child of $a_i$ on the path $\delta_T(a_i,a)$; by Observation~\ref{non-tree-edge-obs} such a path exists (it might be the case that $a_i=a'_{i+1}$ for some $i$).
Our greedy choice in the DFS algorithm implies that for each $i$ 
$$w(a'_i) \geqslant w(a).$$
In the figure to the right, dotted lines represent backward edges, bold-solid lines represent tree-edges, and every normal-solid line represents a path in $T$. 
By Lemma~\ref{binary-tree-obs}, $T$ is a binary tree and thus its vertices have degrees $1$, $2$, and $3$. 
For every leaf $a$ of $T$, we denote by $a_*$ the degree-$3$ vertex of $T$ that is closest to $a$, and by $a'_*$ the child of $a_*$ on the path $\delta_T(a_*,a)$; it might be the case that $a'_*=a$. If such a degree-3 vertex does not exist ($T$ is a path), then we set $a_*$ to be the root $r$.
We refer to the path $\delta_T(a_*,a)$ as the {\em leaf-branch} of $a$, and denote it by $\delta_T(a)$ (because this path is uniquely defined by $a$). A leaf-branch is {\em short} if it contains only two vertices $a$ and $a_*$, and it is {\em long} otherwise. A {\em deep branching-vertex} is a degree-3 vertex of $T$ that is incident to two leaf-branches. Thus the subtree of every deep branching-vertex has exactly two leaves. In the figure to the right $x$ is a deep branching-vertex, but $a_*$ is not.
The following lemma comes in handy for the design of our algorithm.

\begin{lemma}
	\label{two-backward-lemma}
	Every internal node $($of $T$$)$ is adjacent to at most two leaves $($of $T$$)$ in $G$.
\end{lemma}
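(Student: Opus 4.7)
The plan is to argue by contradiction. I would suppose some internal node $v$ of $T$ is adjacent in $G$ to three distinct leaves $a_1, a_2, a_3$ of $T$, and derive a contradiction by examining the subgraph of $G$ induced by $\{v, a_1, a_2, a_3\}$. Since $v$ is adjacent to all three of $a_1, a_2, a_3$ and $G$ is claw-free, at least one edge must be present between two of the $a_i$'s; otherwise $\{v, a_1, a_2, a_3\}$ would induce a $K_{1,3}$ centered at $v$. Call such an edge $a_i a_j$.

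The heart of the proof is then to rule out the existence of any edge of $G$ joining two leaves of $T$. The DFS classifies $a_i a_j$ as either a tree edge or a backward edge. It cannot be a tree edge, because a tree edge joins a parent to one of its children, while neither $a_i$ nor $a_j$ has any children (both are leaves). It also cannot be a backward edge: by Observation~\ref{non-tree-edge-obs}, both endpoints lie on a common root-to-leaf path of $T$, so one of $a_i, a_j$ is a proper ancestor of the other, and hence has at least one child in $T$, contradicting its being a leaf. Both classifications being impossible, we get the desired contradiction.

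I do not expect any serious obstacle. The only small thing to verify is that the paper's convention "leaf of $T$ = node without children, possibly excluding the root" lines up with the argument, which it does since $v$ is internal and therefore distinct from each $a_i$, and since the $a_i$'s are leaves in the strict sense (no children). Thus every internal node of $T$ can be adjacent in $G$ to at most two leaves of $T$, as claimed.
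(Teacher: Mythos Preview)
Your argument is correct and follows essentially the same route as the paper: assume an internal node $v$ is adjacent in $G$ to three leaves, use Observation~\ref{non-tree-edge-obs} (together with the fact that leaves have no children) to conclude no two of those leaves are adjacent in $G$, and deduce that $\{v,a_1,a_2,a_3\}$ induces a $K_{1,3}$, contradicting claw-freeness. The paper's proof is terser—it invokes Observation~\ref{non-tree-edge-obs} directly without separately ruling out the tree-edge case—but the content is the same.
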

\begin{proof}
	For the sake of contradiction assume that an internal node $v$ is adjacent to more than two leaves, say $l_1, l_2, l_3,\dots$ in $G$. By Observation~\ref{non-tree-edge-obs} there are no edges between $l_1,l_2,l_3$ in $G$. Therefore, the subgraph of $G$ induced by $\{v,l_1,l_2,l_3\}$ is a $K_{1,3}$. This contradicts the fact that $G$ is claw-free. 
\end{proof}

It is implied by Lemma~\ref{two-backward-lemma} that every internal node of $T$ is an end-vertex of at most two backward edges that are incident to leaves of $T$.

\subsection{Obtaining a desired tree from $T$}
We assign to every vertex $v\in V$ a charge equal to the weight of $v$. Thus every vertex $v$ holds the charge $w(v)$. In a general picture, our algorithm distributes the charges of internal nodes between the leaves and at the same time modifies the tree $T$ to obtain another tree $T'$ in which every leaf $a$ (excluding $r$) has at least $2.5w(a)$ charge. The algorithm does not touch the charge of $r$. In the end, $T'$ would be our desired tree. Therefore, if $L'$ is the set of leaves of $T'$ (not including $r$) and $I'$ is the set of internal nodes of $T'$ (again not including $r$) then we have $$w(I')\geqslant 2.5 w(L')-w(L')=1.5w(L')=1.5(w(V)-w(I')-w(r)).$$ 
By rearranging the terms and using the fact that $w(r)\leqslant w(V)/n$ we get 
$$2.5w(I')\geqslant 1.5w(V)-1.5w(V)/n.$$
Dividing both sides by $2.5w(V)$ gives the desired ratio
$$\frac{w(I')}{w(V)}\geqslant \frac{3}{5}-\frac{3}{5n}> \frac{3}{5}-\frac{1}{n}.$$

Now we describe the algorithm in detail. In fact we show how to obtain $T'$ from $T$. Recall $a_1$ and $a_2$ as the first and second end-vertices of backward edges incident to every leaf $a$ of $T$. We start by distributing the charges of internal nodes between the leaves using the following rules. Consider every internal node $q$ of $T$. By Lemma~\ref{two-backward-lemma}, $q$ can be an end-vertex of at most two backward edges that are incident to leaves of $T$.
 
\begin{minipage}{.9\textwidth}
	\vspace{8pt}	
	\centering	
\begin{itemize}[leftmargin=.8in]
	\item [Rule 1.] If there is exactly one leaf $a\in T$ such that $q=a_i$ for some $i\in\{1,2\}$, then transfer the entire charge of $a'_i$ to $a$. See Figure~\ref{rule-fig}.
	\item [Rule 2.] If there are two leaves $a,b\in T$ such that $q=a_i=b_j$ for some $i,j\in\{1,2\}$, then transfer half the charge of $a'_i$ to $a$ and half the charge of $b'_j$ to $b$ (it might be the case that $a'_i=b'_j$, for example if $q$ has only one child). See Figure~\ref{rule-fig}.
\end{itemize}   
	\vspace{2pt}
\end{minipage}
\begin{figure}[htb]
	\centering
	\vspace{-5pt}
	\includegraphics[width=.7\linewidth]{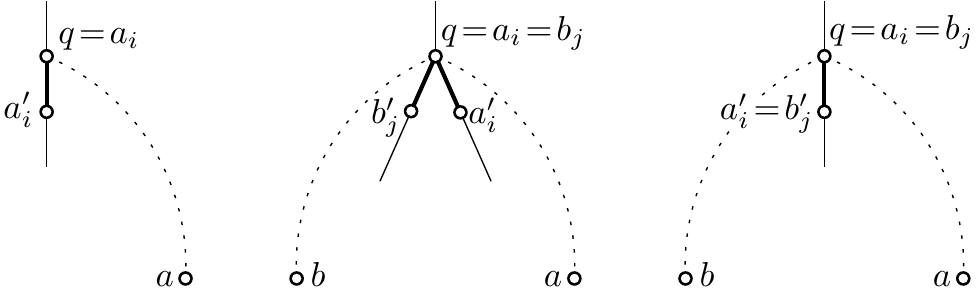}
	\vspace{-3pt}
	\caption{Illustration of Rule 1 and Rule 2.}
		\vspace{-3pt}
	\label{rule-fig}
\end{figure}

If there is no leaf $a\in T$ such that $q=a_1$ or $q=a_2$, then do nothing.
By the above rules, every leaf $a$ receives at least $w(a'_1)/2$ charge from $a'_1$ and at least $w(a'_2)/2$ charge from $a'_2$. Since $w(a'_i)\geqslant w(a)$, the leaf $a$ receives at least $w(a)$ charge. Therefore $a$ holds at least $2w(a)$ charge (including its own charge). 
 To this end, by an analysis similar to the one provided for $T'$ at the beginning of this section, one can conclude that the total internal weight of $T$ is at least $1/2-1/n$ times the total vertex weight. This gives a $(1/2-\epsilon)$-approximation algorithm for the MaxwIST problem by using an argument similar to that of Section~\ref{contribution-section}. 
 
 We say that a leaf $a\in T$ is {\em good} if it holds at least $2.5w(a)$ charge, and {\em bad} otherwise. If all leaves are good then $T$ is a desired tree, and we are done. Assume that $T$ has some bad leaves. Every bad leaf $a$ has the following properties:
\begin{itemize}[leftmargin=.6in]
	\item[(B1)] $a_1$ (and consequently $a_2$) is an ancestor of $a_*$, and
	\item [(B2)] $a'_2$ is equal to $b'_j$ for some leaf $b\neq a$ and some $j\in\{1,2\}$.
\end{itemize}

If property (B1) does not hold then $a'_1$ belongs to the leaf-branch of $a$, and $a'_1\neq a_*$. In this case $a$ receives the entire charge of $a'_1$ by Rule 1. If (B2) does not hold then $a$ receives the entire charge of $a'_2$ again by Rule 1. In either case, $a$ would have at least $2.5w(a)$ charge, and thus it cannot be a bad leaf. Therefore, both (B1) and (B2) hold for every bad leaf $a$. (Property (B2) holds also for $a'_1$ but we do not need this in our analysis.)

Let $T'$ be a copy of $T$.
We go through an iterative process that modifies $T'$ and redistributes charges in such a way that by the end of the process every leaf $a\in T'$ (excluding $r$) holds $2.5w(a)$ charge. This would establish the ratio $3/5-1/n$ as discussed earlier. We emphasis that in the following description $T$ is the original DFS tree and $T'$ is the modified tree during the process.

\vspace{16pt}
\begin{wrapfigure}{r}{2in}
	\vspace{-2pt}
	\centering
	\includegraphics[width=1.6in]{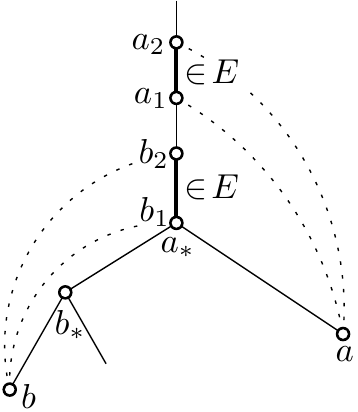}
	\vspace{-2pt}
\end{wrapfigure}
Let $E$ be the empty set. For every leaf $a\in T$ if $a_1$ and $a_2$ are ancestors of $a_*$ and $(a_1,a_2)$ is an edge of $T$ then add $(a_1,a_2)$ to $E$; in this case we say that $a$ {\em introduces} the edge $(a_1,a_2)$ to $E$. If $(a_1,a_2)=(b_1,b_2)$ for some leaf $b\neq a$, then we keep only one instance of this edge in $E$ and choose one of $a$ and $b$ as the introducing vertex; this is consistent with the definition of ``set'' as having no repetitive elements. 

In Section~\ref{process-section} we process the edges of $E$ iteratively. Consider any edge $(a_1,a_2)\in E$ and assume that it is introduced by the leaf $a$. By Rule 1 and Rule 2 the charges of $a'_1$ and $a'_2$ has been transferred to $a$ (and possibly to another leaf). The process will ``release'' at least half the charge of $a'_2$. This released half-charge will be used later in Section~\ref{bad-leaf-section} to take care of bad leaves. During the process, a {\em free charge} refers to the entire or a portion of an internal node's charge that has not been transferred to any leaf, or it was transferred but has been released later. A {\em saturated vertex} is a deep branching-vertex that has lost an incident edge during the process. At the beginning of the process there are no saturated vertices. 

\subsubsection{Processing the edges of $E$}
\label{process-section}
In this section we process the edges of $E$ in such a way that by the end of the process the following properties hold:

\begin{enumerate}[leftmargin=.6in]
	\item[(E1)] For any edge $(a_1,a_2)\in E$, its introducing leaf $a$ is either not a leaf anymore or it has at least $2.5w(a)$ charge.
	\item[(E2)] Any leaf $q$ generated during the process has at least $2.5w(q)$ charge.
	\item[(E3)] All other leaves remain untouched and hold the charges they had before processing $E$. In other word, the process does not convert a good leaf to a bad leaf.
	
	\item[(E4)] For any edge $(a_1,a_2)\in E$ that was processed in case 1 or in case 3 (below), a half-charge of $a'_2$ has been released. We will use these free half-charges to handle bad leaves.
	
	\item[(E5)] Any edge that was removed during the process belongs either to $E$ or to a leaf-branch.
\end{enumerate}
   
Now consider any edge $(a_1,a_2)\in E$ and assume that it is introduced by the leaf $a$. Depending on whether $a_*$ is saturated and/or $a_1$ is the parent of $a_*$ we process $(a_1,a_2)$ in one of the following three cases.

\begin{itemize}[leftmargin=.3in]
	\item[1.] {\em $a_*$ is not saturated.} We consider two sub-cases. 
	\begin{itemize}[leftmargin=.32in]
		\item[1.1.] {\em $\delta_T(a)$ is short.} Remove $(a_1,a_2)$ and $(a,a_*)$ from $T'$ then add $(a,a_1)$ and $(a,a_2)$ as in Figure~\ref{transfer-E}(a). The vertex $a$ is not a leaf anymore. Release any portion (the entire or a half) of charges of $a'_1$ and $a'_2$ that are hold by $a$. If $a_*$ is a deep branching-vertex then it gets saturated (we say that it got saturated by $a$). Notice that if $a$ was holding any portion of $a_*$'s charge (this can happen if $a'_1=a_*$), then this charge is free now.
		\item[1.2.] {\em $\delta_T(a)$ is long.} Let $a'$ be the neighbor of $a$ in $T$; it might be the case that $a'=a'_*$. If $w(a')\geqslant w(a)$ then transfer the charge $w(a')$ from $a'$ to $a$ and release the half-charge of $a'_2$ that is hold by $a$ (although $a$ might hold the entire charge of $a'_2$, the release of a half-charge suffices for the purpose of our algorithm). The vertex $a$ is still a leaf and now it has at least $2.5w(a)$ charge including its own charge; $w(a)+w(a')+0.5w(a'_1)\geqslant 2.5w(a)$. If $w(a')< w(a)$ then remove $(a,a'), (a_1,a_2)$ from $T'$ and add $(a,a_1), (a,a_2)$ as in Figure~\ref{transfer-E}(b). The vertex $a$ is not a leaf anymore, but $a'$ is a new leaf. Transfer the charge $w(a)$ from $a$ to $a'$. Transfer the half-charge of $a'_1$ from $a$ to $a'$, and release the half-charge of $a'_2$ from $a$. The new leaf $a'$ has charge at least $2.5w(a')$ including its own charge; $w(a')+w(a)+0.5w(a'_1)\geqslant 2.5w(a')$. Observe that the charge of $a'$ was not touched by Rule 1 and Rule 2 as $a'$ cannot be an end-vertex of any backward edge incident to a leaf.
	\end{itemize}
	
	\item[2.] {\em $a_*$ is saturated and $a_1$ is the parent of $a_*$.} This case is depicted in Figure~\ref{transfer-E}(c). In this case $a'_1=a_*$. Moreover, either $a$ holds the entire charge of $a_*$, or $a$ holds only a half-charge of $a_*$ and its other half-charge has been released (in case 1.1). In the latter case transfer the free half-charge of $a_*$ to $a$. Now, $a$ holds the entire charge of $a'_1=a_*$. Thus the total charge of $a$ is at least $w(a)+w(a_*)+w(a'_2)\geqslant 2.5w(a)$. This case is ``exceptional'' as the half-charge of $a'_2$ (that is hold by $a$) has not been released.
	  
	\item [3.] {\em $a_*$ is saturated and $a_1$ is not the parent of $a_*$.} In this case, the entire charge of $a_*$ is free (either it was free from the beginning or it was released in case 1.1). We consider two sub-cases.
	\begin{itemize}[leftmargin=.32in]
		\item[3.1.] {\em $\delta_T(a)$ is short.} If $w(a_*)\geqslant w(a)$ then transfer the entire charge of $a_*$ to $a$, and release the half-charge of $a'_2$ from $a$. The vertex $a$ is still a leaf and now it has at least  $w(a)+w(a_*)+0.5w(a'_1)\geqslant 2.5w(a)$ charge. If $w(a_*)< w(a)$ then remove $(a,a_*), (a_1,a_2)$ from $T'$ and add $(a,a_1), (a,a_2)$ as in Figure~\ref{transfer-E}(d). The vertex $a$ is not a leaf anymore, but $a_*$ is a new leaf. Transfer the entire charge of $a$ and the half-charge of $a'_1$ from $a$ to $a_*$, and release the half-charge of $a'_2$ from $a$. The total charge of $a_*$ is at least $w(a)+w(a_*)+0.5w(a'_1)\geqslant 2.5w(a_*)$. 
		\item[3.2.] {\em $\delta_T(a)$ is long.} If $w(a_*)+w(a'_*)\geqslant w(a)$ then transfer the entire charges of $a_*$ and $a'_*$ to $a$ and release the half-charge of $a'_2$ from $a$. The vertex $a$ is still a leaf and now it has at least $w(a)+w(a_*)+w(a'_*)+0.5w(a'_1)\geqslant2.5w(a)$ charge. 
		If $w(a_*)+w(a'_*)< w(a)$ then remove $(a_*,a'_*), (a_1,a_2)$ from $T'$ and add $(a,a_1), (a,a_2)$ as in Figure~\ref{transfer-E}(e). The vertex $a$ is not a leaf anymore, but $a_*$ and $a'_*$ are new leaves. Transfer the entire charge of $a$ and the half-charge of $a'_1$ from $a$ to $a_*$ and $a'_*$ (in such a way that $a_*$ receives at least $1.5w(a_*)$ charge and $a'_*$ receives at least $1.5w(a'_*)$ charge), and release the half-charge of $a'_2$ from $a$. The leaves $a_*$ and $a'_*$ now have at least $2.5w(a_*)$ and $2.5w(a'_*)$ charges respectively (including their own charges). Recall that the entire charge of $a_*$ is free and observe that the charge of $a'_*$ was not touched by Rule 1 and Rule 2. 
	\end{itemize} 	  
\end{itemize}

\begin{figure}[htb]
	\centering
	\setlength{\tabcolsep}{0in}
	$\begin{tabular}{ccccc}
	\multicolumn{1}{m{.2\columnwidth}}{\centering\includegraphics[width=.15\columnwidth]{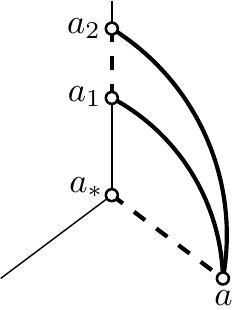}}
	&\multicolumn{1}{m{.2\columnwidth}}{\centering\vspace{0pt}\includegraphics[width=.17\columnwidth]{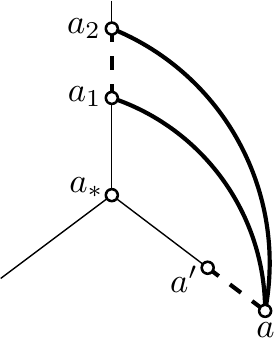}}
	&\multicolumn{1}{m{.2\columnwidth}}{\centering\includegraphics[width=.15\columnwidth]{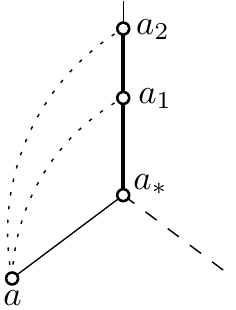}}
	&\multicolumn{1}{m{.2\columnwidth}}{\centering\vspace{0pt}\includegraphics[width=.15\columnwidth]{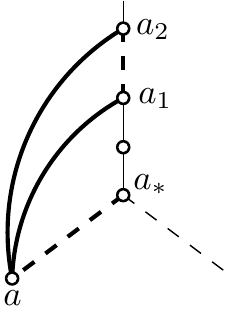}}
	&\multicolumn{1}{m{.2\columnwidth}}{\centering\includegraphics[width=.17\columnwidth]{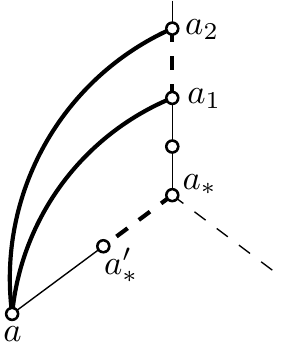}}
	\\
	\vspace{-3pt}
	(a)&(b)&(c)&(d)&(e)
	\vspace{-3pt}
	\end{tabular}$
	\caption{Processing the edge $(a_1,b_1)\in E$ that is introduced by $a$. Bold lines represent edges.}
	\label{transfer-E}
\end{figure}

\subsubsection{Handling bad leaves}
\label{bad-leaf-section}
Consider the tree $T'$ obtained after processing all edges of $E$. By property (E2) any leaf $q\in T'$ that is generated in the above process has at least $2.5w(q)$ charge. By (E1) any bad leaf $a\in T$ that introduced an edge in $E$ is either not a leaf in $T'$ or it has at least $2.5w(a)$ charge. Therefore, these leaves are not bad anymore. In this section we show how to take care of remaining bad leaves, i.e., bad leaves of $T$ that do not introduce any edge in $E$. Consider any such bad leaf $a$ (which has not introduced any edge in $E$). By (E3) the charge of $a$ has not been touched, so $a$ still holds at least $2w(a)$ charge. Recall from (B1) and (B2) that $a_1, a_2$ are ancestors of $a_*$, and $a'_2=b'_j$ for some leaf $b\neq a$ and some $j\in\{1,2\}$. Since $a'_2=b'_j$, they have the same parent in $T$, i.e., $a_2=b_j$. Since $a$, $a'_2$, $b$ are incident to $a_2$ and $G$ is claw-free and $(a,b)\notin G$ (by Observation~\ref{non-tree-edge-obs}), we should have $(a,a'_2)\in G$ or $(b, b'_j)\in G$. We consider two cases.

\begin{itemize}
\item\parbox[t]{\dimexpr\textwidth-\leftmargin}{%
	\vspace{-8pt}
	\begin{wrapfigure}{r}{1.6in}
		\centering
			\vspace{-8pt}
		\includegraphics[width=1.3in]{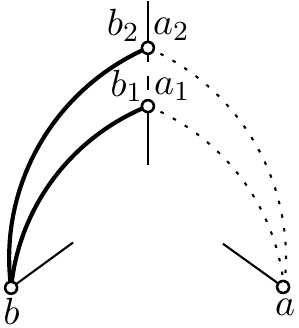}
	\end{wrapfigure}
$(a,a'_2)\in G$. In this case we have $a_1=a'_2$. Thus $(a_1,a_2)\in T$ and consequently $(a_1,a_2)\in E$. Since $a$ has not introduced any edge in $E$, the edge $(a_1,a_2)$ was introduced by a leaf $b\neq a$. Thus $(a_1,a_2)=(b_1,b_2)$ and this edge was processed. We claim that $(b_1,b_2)$ was not processed in the exceptional case 2 (where $b_*$ is a deep branching-vertex that is saturated and $b_1$ is the parent of $b_*$) because otherwise $b_*$ was saturated by its other branch's leaf which is $a$; this contradicts the fact that $a$ has not introduced any edge in $E$. Therefore, $(b_1,b_2)$ was processed either in case 1 or in case 3. By (E4) a half-charge of $b'_2=a'_2$ has been released from $b$. We transfer this free half-charge to $a$, so its total charge is now at least $w(a)+w(a'_2)+0.5w(a'_1)\geqslant 2.5w(a)$. 
}

	\item {\em $(a,a'_2)\notin G$ and $(b, b'_j)\in G$.} In this case $a_1\neq a'_2$. First assume that $j=2$. Then $a_2=b_2$, $b_1=b'_2=a'_2$, $(b_1,b_2)\in E$, and $(b_1,b_2)$ was processed. Also, $b$ is the introducing leaf of $(b_1,b_2)$ because by Observation~\ref{non-tree-edge-obs} the node $b_2$ cannot be incident to any leaf other than $a$ and $b$. Moreover $(b_1,b_2)$ was not processed in case 2 (where $b_*$ is a deep branching-vertex that is saturated) because otherwise $b_*$ was saturated by its other branch's leaf which is $a$; this contradicts the fact that $a$ has not introduced any edge in $E$. Therefore $(b_1,b_2)$ was processed either in case 1 or in case 3, and thus by (E4) a half-charge of $b'_2=a'_2$ is free. We transfer this free half-charge to $a$, so its total charge is now at least $w(a)+w(a'_2)+0.5w(a'_1)\geqslant 2.5w(a)$. 
	
	Now assume that $j=1$. Then $a_2=b_1$, $a'_2=b'_1\neq a_1$, and $(b, b'_1)\in G$. Our choice of $b_1$ and $b_2$ (as the first and second end-vertices of non-tree edges incident to $b$) implies that $(b, b'_1)\in T$. Thus $b'_1=b_*$ and $\delta_T(b)$ is short. Since $a_1\neq b_*$ and $a_1$ is an ancestor of $a_*$ (because $a$ is bad), $a_*$ is in the subtree of $b_*$, and thus $b_*$ is not a deep branching-vertex. We consider two cases.
	
	\begin{itemize}
		\item $(b_1,b_2)\in E$. Then $(b_1,b_2)$ was processed. In this case $b$ is the introducing leaf of $(b_1,b_2)$ because by Observation~\ref{non-tree-edge-obs} the vertex $b_1$ cannot be incident to any leaf other than $a$ and $b$. Since $b_*$ is not a deep branching-vertex (and cannot be saturated) the edge $(b_1,b_2)$ was processed in case 1.1. In this case the charges of  $b'_1=a'_2=b_*$ and $b'_2$ has been released from $b$. We transfer the free half-charge of $b'_1=a'_2$ to $a$, so its total charge is now at least $w(a)+w(a'_2)+0.5w(a'_1)\geqslant 2.5w(a)$.  
		
		\item
		\parbox[t]{\dimexpr\textwidth-2.1\leftmargin}{%
			\vspace{-8pt}
			\begin{wrapfigure}{r}{1.4in}
				\vspace{-8pt}
				\centering
				\includegraphics[width=1.05in]{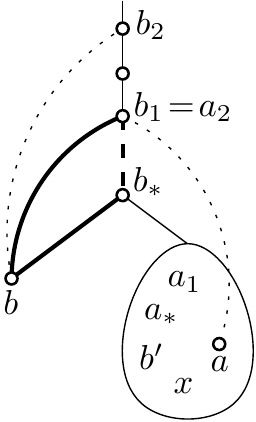}
			\end{wrapfigure} $(b_1,b_2)\notin E$. Then $(b_1,b_2)$ was not processed. The edge $(b_1,b_*)$ is not in $E$ because $b_1$ cannot be incident to any leaf other than $a$ and $b$, and thus by property (E5) this edge is present in $T'$. By the same property, the edge $(b,b_*)$ is also present in $T'$ because it could be removed only if $(b_1,b_2)$ was processed. Let $b'\neq b$ be the other child of $b_*$ in $T$. Either $(b_*,b')$ is present in $T'$, or it was removed and a new edge say $(b_*,x)$ was added for some $x$ in the subtree of $b'$ (this could happen only if $(b_*,b')\in E$). In either case, $b_*$ has degree 3 in $T'$. In this setting we remove $(b_*,b_1)$ from $T'$ and add $(b,b_1)$ as in figure to the right. The vertex $b$ is not a leaf anymore, and no new leaf is generated. Transfer the half-charge of $b'_1=a'_2$ from $b$ to $a$. The leaf $a$ has now at least $w(a)+w(a'_2)+0.5w(a'_1)\geqslant 2.5w(a)$ charge. 
	} 
	\end{itemize}
\end{itemize} 

\subsection{Correctness of the algorithm and final remarks}

Although the correctness of the algorithm should be clear from our construction of $T'$, we briefly describe why $T'$ should be a spanning tree.
This is implied by the fact that we did not ``double count'' any edges: we did not remove any edge twice and did not add any edge twice. By property (E5) any edge that was removed in Section~\ref{process-section} belongs either to $E$ or to a leaf-branch. Every edge in $E$ is introduced by a unique leaf (this is how $E$ was defined), and every edge of a leaf-branch is also introduced by a unique leaf. The edge $(b*,b_1)$ that is removed in the last sub-case of Section~\ref{bad-leaf-section} is uniquely introduced by $b$. Since these introducing leaves have been considered only once (either in Section~\ref{process-section} or in Section~\ref{bad-leaf-section}), the removed edges have not been double counted. Similarly, any edge that was added (in Section~\ref{process-section} or in Section~\ref{bad-leaf-section}) is incident to these introducing leaves. Thus, the added edges have not been double counted either. Therefore, $T'$ is a spanning tree. 
Also, we did not generate any new charges, and only moved the existing charges around. Moreover, by our construction, every leaf $a$ of $T'$ (except $r$) has at least $2.5w(a)$ charge. Therefore, $T'$ is a desired spanning tree, and thus the algorithm is correct.  

\vspace{10pt}
\noindent{\bf Running-time analysis.} To analyze the running time of the algorithm, let $n$ and $m$ be the number of vertices and edges of $G$, respectively. Since $G$ is connected, $n=O(m)$. The greedy DFS algorithm computes the maximum-weight non-visited neighbor of every vertex $v$ at most two times because the DFS-tree is a binary tree (by Lemma~\ref{binary-tree-obs}). Therefore, the time---that DFS algorithm spends at each vertex $v$---is proportional to the number of neighbors of $v$ in $G$. It turns out that the greedy DFS-tree $T$ can be computed in $O(m)$ total time. For every leaf $a\in T$ we can find the vertex $a_*$ by waking up the tree from $a$ until reaching the first degree-3 vertex. Therefore, such degree-3 vertices can be found in $O(n)$ total time for all leaves. To find $a_1$ and $a_2$ we first compute a topological ordering $\mathcal{T}$ of nodes of $T$ in such a way that for any edge of $T$ the child comes before its parent in the ordering. Such an ordering can be computed in $O(n)$ time by walking up the tree from leaves. To determine $a_1$ and $a_2$ we now iterate over all non-tree edges incident to $a$ and find the two whose end-vertices appear before others in $\mathcal{T}$. Therefore, $a_1$ and $a_2$ can be computed in $O(m)$ total time for all leaves $a$. Having $a_1,a_2,a_*$ in hand for every leaf $a$, each edge processing in Section~\ref{process-section} and each bad-leaf handling in Section~\ref{bad-leaf-section} takes $O(1)$ time because operations of this type involve local modification of the tree. 
Therefore the total time of our algorithm is $O(n+m)$. 
  
\vspace{10pt}
\noindent{\bf Inclusion of degree-1 vertices.} As discussed in Section~\ref{contribution-section}, first we obtain a $\left(3/5-\epsilon\right)$ approximation algorithm for the MaxwIST problem in claw-free graphs of degree at least three.
Although the ratio $3/5-1/n$ (with respect to total vertex weight) may not be achievable if we allow vertices of degree less than 3, the approximation factor $3/5-\epsilon$ might be achievable. By a minor modification to our algorithm we can get the same approximation factor even if the graph has degree-1 vertices.
For the objective of the MaxwISP problem we can assume that every degree-1 vertex has weight 0. This assumption is valid as every degree-1 vertex of $G$ will be a leaf in every spanning tree of $G$. During the algorithm we only process leafs that have at least two incident backward edges.
\vspace{8pt}

The following theorem summarizes our result.
\begin{theorem}
	There exists a linear-time $\left(\frac{3}{5}-\epsilon\right)$-approximation algorithm for the maximum weight internal spanning tree problem in claw-free graphs without degree-2 vertices, for any $\epsilon>0$.
\end{theorem}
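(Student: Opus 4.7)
The theorem packages three items that have been built up in the preceding analysis: the ratio $w(I')/w(V) \geq 3/5 - 1/n$ produced by the charging scheme on $T'$, the standard reduction from such a ratio to a $(3/5 - \epsilon)$-approximation of OPT, and the $O(n+m)$ time bound already argued in the running-time paragraph above. The plan is to combine these in the familiar way and then justify the one new ingredient, namely the extension of the hypothesis from ``minimum degree $\geq 3$'' to ``no degree-$2$ vertices.''

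For the first part, fix $\epsilon > 0$ and apply the dichotomy from Section~\ref{contribution-section}: if $n \leq 1/\epsilon$, solve the instance optimally by enumerating all spanning trees, which runs in time depending only on $\epsilon$ (hence constant for fixed $\epsilon$); otherwise, run the greedy DFS followed by the edge-processing and bad-leaf-handling of Sections~\ref{process-section} and~\ref{bad-leaf-section} to obtain $T'$, observe that $\mathrm{OPT} \leq w(V)$, and conclude
\[
\frac{w(I')}{\mathrm{OPT}} \;\geq\; \frac{w(I')}{w(V)} \;\geq\; \frac{3}{5} - \frac{1}{n} \;\geq\; \frac{3}{5} - \epsilon.
\]
The running time of the main branch is $O(n+m)$ as already analyzed, and the small-$n$ branch takes $O(1)$ time, so the overall algorithm is linear.

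To allow degree-$1$ vertices without altering the rest of the pipeline, I would first observe that every degree-$1$ vertex of $G$ must be a leaf in every spanning tree of $G$ and therefore never contributes to the MaxwIST objective; hence its weight may be reset to $0$ without changing OPT or the internal weight achieved by any spanning tree. With this reweighting in place, the only structural change needed is to restrict the processing in Sections~\ref{process-section} and~\ref{bad-leaf-section} to those leaves of $T$ that carry at least two incident backward edges, which are precisely the leaves whose degree in $G$ is at least $3$. The remaining leaves (the degree-$1$ vertices) contribute nothing to either side of $w(I')/w(V)$, so the bound $w(I')/w(V) \geq 3/5 - 1/n$ persists verbatim, and the running-time accounting is unaffected. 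The one thing I expect to be worth double-checking -- and the only genuine obstacle -- is that Lemma~\ref{two-backward-lemma} and property~(B2), which underwrite the case analysis, invoke only claw-freeness together with the existence of the relevant backward edges; both conditions remain intact when attention is restricted to leaves $a$ with $d_G(a) \geq 3$. Combining the two parts yields the claimed linear-time $(3/5 - \epsilon)$-approximation.
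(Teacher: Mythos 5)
Your proposal is correct and follows essentially the same route as the paper: the dichotomy on $n$ versus $1/\epsilon$ together with $\mathrm{OPT}\leq w(V)$ converts the ratio $3/5-1/n$ into a $(3/5-\epsilon)$-approximation, the $O(n+m)$ bound is the one already established, and the degree-$1$ vertices are handled exactly as in the paper's final remark (reset their weights to $0$ since they are leaves in every spanning tree, and only process leaves with at least two incident backward edges). Your added check that Lemma~\ref{two-backward-lemma} and property (B2) rely only on claw-freeness and the existence of the two backward edges is a sensible verification that the paper leaves implicit.
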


\section{Conclusions}

Although the ratio $3/4-3/n$ for cubic graphs is almost tight and cannot be improved beyond $3/4$ (with respect to total vertex weight) and the ratio $3/5-1/n$ (with respect to total vertex weight) may not be achievable for claw-free graphs of degree less than 3, approximation factors better than $3/4-\epsilon$ and $3/5-\epsilon$ might be achievable. 
A natural open problem is to improve the approximation factors further. It would be interesting to drop the ``exclusion of degree-2 vertices'' from the $(3/5-\epsilon)$-approximation algorithm for claw-free graphs. Also, it would be interesting to use our greedy DFS technique to obtain better approximation algorithms for the MaxwIST problem in other graph classes.

\bibliographystyle{abbrv}
\bibliography{Max-IST}
\end{document}